\documentclass[final]{IEEEtran}
\IEEEoverridecommandlockouts
% The preceding line is only needed to identify funding in the first footnote. If that is unneeded, please comment it out.
\usepackage{cite}
\usepackage{amsmath,amssymb,amsfonts,amsthm}
\usepackage{bbm}
\usepackage{algorithmic}
\usepackage{graphicx}
\usepackage{textcomp}
\usepackage{float}
\usepackage{stfloats}
\usepackage{xcolor}
\def\BibTeX{{\rm B\kern-.05em{\sc i\kern-.025em b}\kern-.08em
    T\kern-.1667em\lower.7ex\hbox{E}\kern-.125emX}}

% \setlength{\abovedisplayskip}{0pt}
% \setlength{\belowdisplayskip}{0pt}
% Bold lowercase: syntax \nb# where # is {a ... z, 0,1}

\def\nb0{{\mathbf{0}}}
\def\nb1{{\mathbf{1}}}

% Bold capital letters: syntax \nb# where # is {A ... Z}

% \mathcal: syntax \ncal# where # is {A ... Z}

% \mathbb: syntax \nbb# where # is {A ... Z}

% \mathfrak:

% Roman: {\rm } syntax \nrm# where # is {a ... z}

% Special symbols

% Theorems etc.

\newtheorem{thm}{Theorem}

%%%%%%%% Backwards compatibility

\def\figref#1{Fig.\,\ref{#1}}%
\def\eqnref#1{Eqn.\,\ref{#1}}%

   % rate coverage

							% Threshold = \beta_i
			% Signal to interference plus noise ratio

% Fading

%% Symbols changed
% \def\i{\mathbf{1}}					% changed to \nb1
% \def\d{\mathrm{d}}					% changed to \nrmd
% \def\L{\mathcal{L}}					% changed to \ncalL
% \begin{definition}					% changed to \begin{ndef}

% \l also gives problems. Use \ell after defining it if needed.

%% D2D def

% \def\yj{y_{jx}}

%% 

%% fading

% \everymath{\small}

\begin{document}

\title{Age of Positioning with Stochastic Motion Models}

\author{Wasif J. Hussain, Don-Roberts Emenonye, R. Michael Buehrer, Harpreet S. Dhillon \,\,\vspace{-2.1em}
		\thanks{The authors are with Wireless@VT, Virginia Tech, Blacksburg,
			VA 24061 USA (e-mail: {wasif,donroberts,rbuehrer,hdhillon}@vt.edu). The support of US NSF (grants CNS-1923807 and CNS-2107276) is gratefully acknowledged. }
	}

\maketitle

\begin{abstract}
% Age of Information (AoI) is a widely used metric for assessing the freshness of data in communication networks, particularly in systems that rely on real-time information updates. For localization purposes, maintaining low AoI is essential to ensure timely and accurate position information. This paper is the first work to introduce an age informed metric that captures both spatial and temporal variations in positioning accuracy of an agent following random trajectories. We use the random waypoint (RWP) model which is widely adopted in sensor network simulations, that captures the stochastic movement of users by modeling waypoint-based trajectories. We derive closed form expression to evaluate AoP under different queuing disciplines and verify it with numerical simulations. 
Age of Information (AoI) is a key metric used for evaluating data freshness in communication networks, particularly in systems requiring real-time updates. In positioning applications, maintaining low AoI is critical for ensuring timely and accurate position estimation. This paper introduces an age-informed metric, which we term as Age of Positioning (AoP), that captures the temporal evolution of positioning accuracy for agents following random trajectories and sharing sporadic location updates. Using the widely adopted Random Waypoint (RWP) mobility model, which captures stochastic user movement through waypoint-based trajectories, we derive closed-form expressions for this metric under various queuing disciplines and different modes of operation of the agent. The analytical results are verified with numerical simulations, and the existence of optimal operating conditions is demonstrated.
\end{abstract}

\begin{IEEEkeywords}
AoI, wireless localization, random waypoint mobility.
\end{IEEEkeywords}

\section{Introduction}

% AoI has been studied as an important performance metric to quantify the freshness of information, defined as the time elapsed since the most recent update was generated~\cite{kaul,pappas2023age}. AoI based metrics have been widely studied in wireless communication systems for applications such as node selection in cognitive radar tracking~\cite{howard_aoi_tracking}, design of efficient Internet of Things (IoT) systems~\cite{md_aoi_mag}, resource allocation and trajectory optimization in unmanned aerial vehicle (UAV) aided communications~\cite{md_aoi_uav}, and
% % optimizing resource allocation in unmanned aerial vehicle (UAV) aided communications~\cite{uav_aoi},
% optimizing tradeoff in joint sensing and communication systems~\cite{uav_aoi_poor}. In wireless positioning systems, which employ transmission of known signals yielding \textit{measurements} to estimate the location of mobile target devices, timely position estimation is critical for accurately capturing the target's real-time location. Position information can become stale or completely unavailable due to several factors, primarily the loss of connectivity to servers, as well as limited transmission resources, processing delays, and network co-operation. An age-informed metric that can capture the temporal effects introduced by these factors on real-time positioning accuracy of an agent can help provide valuable insights for designing efficient positioning systems.

AoI has been studied as a key performance metric for quantifying information freshness, defined as the time elapsed since the most recent update was generated~\cite{kaul,pappas2023age}. AoI-based metrics have been applied across various wireless communication systems; see~\cite{pappas2023age} for a comprehensive overview. The most relevant examples to this paper include node selection in cognitive radar tracking~\cite{howard_aoi_tracking}, efficient Internet of Things (IoT) system design~\cite{md_aoi_mag}, resource allocation and trajectory optimization in UAV-aided communications~\cite{md_aoi_uav}, and optimizing trade-offs in joint sensing and communication systems~\cite{uav_aoi_poor}. In wireless positioning systems, accurately determining a target’s real-time location relies on the timely capture and processing of a sufficient number of measurements. However, position information can become outdated or entirely unavailable due to factors such as connectivity loss to servers, limited transmission resources, processing delays, and network cooperation constraints. An age-informed metric that can capture the temporal effects introduced by these factors on real-time positioning accuracy of an agent can help provide valuable insights for designing efficient positioning systems, which is the main focus of this paper.

Despite its importance, the role of AoI in wireless positioning systems has received little attention, apart from a couple of recent studies. In~\cite{wjh_mil}, the authors developed a Cram\'er-Rao Bound based framework to study the effect of age in the decay of information content of previously obtained range measurements for agents following deterministic motion models. In~\cite{enhancing_loc_awareness}, the authors introduced a new aging error of localization (AEOL) metric to incorporate the temporal decay of positioning accuracy which was quantified as the time averaged RMSE of position estimation for agents following rectilinear trajectory and a uniform distribution on RMSE. That said, a comprehensive analysis for non-deterministic motion models and various delay profiles remains an open area of research.

Inspired by this major gap, in this paper we propose a novel AoP metric in the context of agents following random trajectories. Our framework captures both errors due to estimation inaccuracy (spatial accuracy) and inaccuracy due to the fact that the mobile agent has moved since the last update (temporal decay). To model the agent's mobility we use the random waypoint (RWP) model, which is a commonly used stochastic mobility model for scenarios where node mobility patterns are unknown or highly dynamic, making it a popular choice for analyzing real-world performance for applications such as wireless sensor networks (WSNs), drone cellular networks and mobile ad hoc networks (MANETs), which require precise position estimation in mobile contexts~\cite{wsn_rwp,manet_rwp,path_plan_rwp_wsn,banagar2020performance}. 
% In~\cite{wjh_asi}, the authors studied the coverage of an agent monitoring a time decaying environmental process at locations modeled by distinct point processes. 
% For tasks such as tracking and surveillance/monitoring which require agents to explore in wider space and relay data, RWP based motion modeling has been a very useful tool~\cite{wsn_rwp,path_plan_rwp_wsn}. 
% In~\cite{ocean_rwp}, the authors characterized interference in underwater agents at destination points modeled by a RWP model. 
% There have been works which have studied the role of selective transmissions in sensor networks under energy saving and limited scenarios~\cite{arroyo2009optimal,arroyo2010optimal}. 

\textit{Contributions}: We introduce a new mean squared error-based metric that captures the non-linear temporal decay in positioning accuracy caused by aged spatial measurements for an agent following the RWP mobility model. In contrast to the predominant linear age assumptions in the AoI literature, our formulation leads to a non-linear age model. This provides a concrete physical example where a non-linear age model is more appropriate~\cite{nl_aoi_pappas_icc,nl_aoi_pappas_trans}. Inspired by works on selective transmission under energy saving and limited scenarios\cite{arroyo2009optimal,arroyo2010optimal}, we introduce a probabilistic model according to which the agent shares sporadic location updates termed as \textit{polling}. These updates are processed sequentially and thus experience differential aging which has been modeled through a queue-based framework. We propose two different modes of operations of the agent based on our capability to keep track of its random movements. Using tools from queuing theory and stochastic processes, we derive closed form expression for AoP which is verified via numerical simulations. We demonstrate the existence of an optimal polling frequency for achieving optimal AoP. 

\section{System Model and Preliminaries}
\subsection{Random Waypoint Mobility Model}
\begin{figure}[!htbp]
    \centering\includegraphics[clip,trim=9cm 7cm 7cm 3cm,width=0.8\linewidth,keepaspectratio]{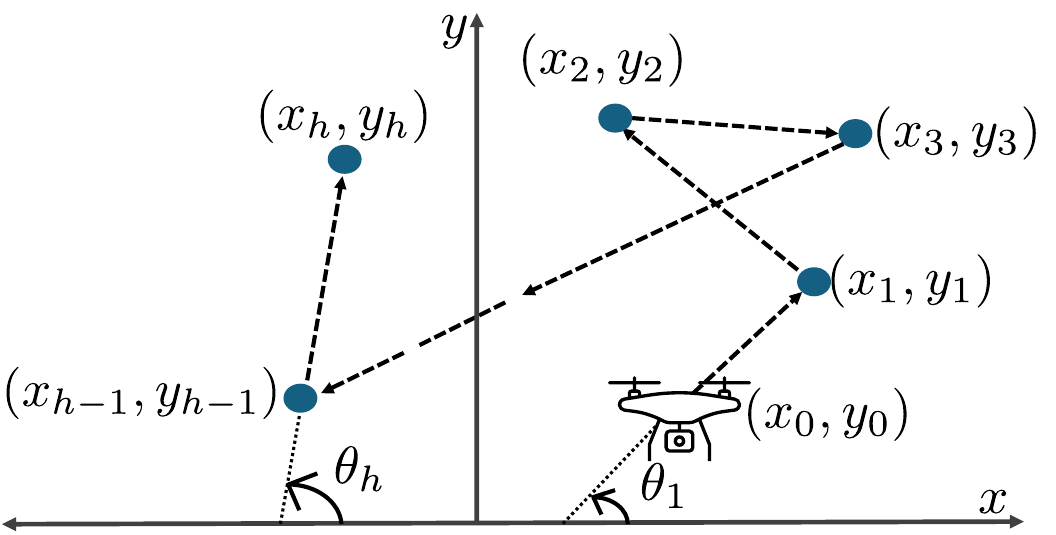}
    \caption{Illustration of the RWP model.} \label{fig:rwp_illus}
\end{figure}
% \begin{figure*}[b]
% \noindent\rule{\textwidth}{1pt}
% \begin{small}
% \begin{equation}
% \label{eqn:aop_formalism}
%     {\bf MSE}=\mathbb{E}\left[(\mathbf{p_n}-\mathbf{\widehat{p_o}})(\mathbf{p_n}-\mathbf{\widehat{p_o}})^{\rm T}\right]= \underbrace{\mathbb{E} \left[\begin{array}{cc}
%          \left(x_o-\widehat{x_o}\right)^2 & \left(x_o-\widehat{x_o}\right)\left(y_o -\widehat{y_o}\right)  \\
%          \left(x_o-\widehat{x_o}\right)\left(y_o -\widehat{y_o}\right) & \left(y_o -\widehat{y_o}\right)^2 
%     \end{array}\right]}_{\mathbf{C_{\rm init}}} + \sum_{i=1}^{h} v^2\delta_i^2 \mathbb{E}\left[\begin{array}{cc}
%               \cos^2\theta_i& \cos\theta_i \sin\theta_i  \\
%               \cos\theta_i \sin\theta_i & \sin^2\theta_i
%          \end{array}\right]
% \end{equation}
% \end{small}
% \end{figure*}
As illustrated in~\figref{fig:rwp_illus}, we have a mobile agent operating in some area of interest whose position we want to track at a central server. The RWP model defines the agent’s movement in terms of waypoints, where it switches direction based on its latest measurement. According to this model, the agent starts at an initial location and determines its next waypoint by selecting a direction based on its most recent measurement, which we assume to be uniformly distributed in $(0,2\pi]$. To maintain an application-agnostic analysis, we assume a uniform distribution, which is sufficient for our purposes. However, our framework is not inherently restricted to this assumption, and the results can be extended to other distributions, {\em albeit} with potential loss of tractability. The agent then moves in this direction with a constant rectilinear velocity $v$, for a fixed time step $\delta$, completing what we refer to as a hop. At the end of each hop, the agent updates its direction with a new measurement, and this process repeats.

% The RWP model defines the movement of the agent in terms of different locations where it switches direction termed as \textit{waypoints}. According to this model, the agent starts at some initial location, and for deciding the next waypoint, it chooses an arbitrary direction uniformly distributed in $[0,2\pi)$. It then travels in that direction with a constant rectilinear velocity $v$ for an arbitrary time step $\delta$. We call each of these movements as a hop. At the end of each hop the agent re-initializes with a new direction, and this process is repeated. 
% For tractability and practical considerations, we limit our analyses to scenarios where the agent can share its positioning related measurements only at waypoints. 

% Such a movement of the agent is represented by the RWP model. Under this model, the agent starts at some initial location and for deciding the next waypoint it chooses an arbitrary direction uniformly distributed in $[0,2\pi)$ and travels in that direction with a constant velocity for an arbitrary time-step $\delta$. We call each of these movements as a hop. At the end of each hop the agent re-initializes with a new direction and this process is repeated. 
% At waypoints the agent can choose to share measurements which have been modeled as an arrival process. The instant these measurements are processed to yield position estimate constitute the departure process. 
The agent's real-time position during the $h$-th hop is given as:
\begin{small}
\begin{equation}
    \mathbf{p_{n}}=
    \left[\begin{array}{cc}
         x_{n} \\
         y_{n}
    \end{array}\right]=\left[\begin{array}{cc}
         x_{o} \\
         y_{o}
    \end{array}\right]+\sum_{i=1}^{h} v\delta_i\left[\begin{array}{cc}
         \cos\theta_i \\
         \sin\theta_i
    \end{array}\right] ,
\end{equation}
\end{small}where, $\delta_i$ is the time spent in the $i$-th hop, $\mathbf{p_{\rm o}}=(x_o,y_o)$ and $\mathbf{\hat{p}_o}=(\hat{x}_o,\hat{y}_o)$ is the agent's true location and position estimate, respectively, when the last update was shared. Position estimation can be done by a number of techniques, such as time of arrival (ToA), received signal strength (RSS), etc\cite{handbook_of_loc}.

We propose two modes of operation of the agent:
\begin{itemize}
    \item \textbf{Dead Reckoning (DR) aided:} DR refers to estimating the position of an agent by extrapolating a past estimate based on inertial distance and direction feedback. In this mode, we assume the use of inertial direction sensors to track the agent's random direction switches $(\hat{\theta}_i)$, thus enabling DR. We assume $\hat{\theta_i}=\theta_i+\Delta$, where, $\Delta\sim \rm{Unif}\left(-\epsilon,\epsilon\right)$, and $\epsilon$ is small. 
    \item \textbf{Motion agnostic (MA) operation:} In this mode, we are completely unaware of the agent's random movement in between consecutive hops. We rely on the last obtained position estimate ($\mathbf{\hat{p}_o}$) as the agent's real-time position estimate ($\mathbf{\hat{p}_n}$). Encompassing a higher level of uncertainty and only minor analytical differences, this mode will be the focus of our discussion for the rest of the paper.
\end{itemize}For a unified analysis, all the results will be derived in terms of a parameter $c$ , where $c = 1$ corresponds to DR and $c = 0$ to MA. In terms of $c$ , the agent's real-time position estimate is given as: \begin{small}
    \begin{equation}
    \mathbf{\hat{p}_n}=\mathbf{\hat{p}_o}+ c\sum_{i=1}^h v\delta_i \left[\begin{array}{cc}
         \cos\hat{\theta}_i \\
         \sin\hat{\theta}_i
    \end{array}\right].
    \end{equation}
    \end{small}The mean squared error between the estimated position and the true position is given as: 
    \begin{small}
    \begin{equation}
        \label{eqn:aop_formalism}
        \begin{split}
        {\bf MSE}=&\mathbb{E}\left[(\mathbf{p_n}-\mathbf{\hat{p}_n})(\mathbf{p_n}-\mathbf{\hat{p}_n})^{\rm T}\right] \\
        =& \underbrace{\mathbb{E} \left[\begin{array}{cc}
             \left(x_o-\hat{x}_o\right)^2 & \left(x_o-\hat{x}_o\right)\left(y_o -\hat{y}_o\right)  \\
             \left(x_o-\hat{x}_o\right)\left(y_o -\hat{y}_o\right) & \left(y_o -\hat{y}_o\right)^2 
        \end{array}\right]}_{\mathbf{C_{\rm o}}} \\
        & + \sum_{i=1}^{h} v^2\delta_i^2 \mathbb{E}\left[\begin{array}{cc}
      c_1^2& c_1c_2  \\
      c_1c_2& c_2^2
         \end{array}\right],
         \end{split}
    \end{equation}
    \end{small}where, $c_1=\left(\cos\theta_i - c\cos\hat{\theta}_i\right)$, $c_2=\left(\sin\theta_i-c\sin\hat{\theta}_i\right)$, and $\mathbf{C}_{\rm o}$ denotes the covariance matrix for any unbiased position estimator. 
    % Position estimation can be done by a number of ways including time of arrival (ToA), received signal strength (RSS), etc methods~\cite{handbook_of_loc}. 
    We use the trace of mean squared error matrix, which we term as position error bound (PEB), as a metric to quantify position uncertainty given as:
\begin{small}
\begin{equation}
\begin{split}
\label{eq:peb_expr}
    {\rm PEB}&={\rm Trace}(\mathbf{MSE})
    = {\rm PEB}_{\rm o} + \left(1+c^2-2c\mathbb{E}\left[\cos\Delta\right]
    \right)\sum_{i=1}^h v^2\delta_i^2 \\
    &={\rm PEB}_{\rm o}+\left(1-c+c\mathbb{E}\left[\Delta^2\right]\right)\sum_{i=1}^h v^2\delta_i^2={\rm PEB}_{\rm o}+\kappa\sum_{i=1}^h v^2\delta_i^2,
\end{split}
\end{equation}
\end{small}where, $\kappa=1$ for $c=0$, and $\kappa=\frac{\epsilon^2}{3}$ for $c=1$, respectively.
% \subsection{Extension to dead reckoning}
% Dead reckoning refers to estimating the position of an agent by extrapolating a past estimate based on inertial speed and direction feedback. In this section, we present a natural extension of our model for a more relaxed setting, where we can obtain the agent's estimated direction of motion $\hat{\theta}_i$, using inertial direction sensors, thus enabling dead reckoning. Its real-time position estimate is modified as: \begin{small}$
%     \mathbf{\hat{p}_n}=\mathbf{\hat{p}_o}+ \sum_{i=1}^h v\delta_i \left[\begin{array}{cc}
%          \cos\hat{\theta}_i \\
%          \sin\hat{\theta}_i
%     \end{array}\right].$\end{small}

% The PEB for dead reckoning enabled operation, where, $\hat{\theta_i}\sim \rm{Unif}\left(\theta_i-\epsilon,\theta_i+\epsilon\right)$, is given as:
% \begin{small}
%     \begin{equation}
% \label{eq:peb_expr_2}
%     {\rm PEB}={\rm PEB}_{\rm o}+\sum_{i=1}^h  \frac{v^2\delta_i^2\epsilon^2}{3}.
% \end{equation}
% \end{small}% where, $\displaystyle\kappa=\frac{\sigma_\epsilon^2 \sigma_v^2}{v^2}$
% Considering small $\epsilon$, we have used Taylor series approximation for the trigonometric functions and a few simplifications.
\subsection{Queuing Model}
% {\color{red} Can be elaborated more.}
% Queues have been popularly used in communication system design to address limited transmission~\cite{enhancing_loc_awareness}, network cooperation~\cite{net_coop}, and other delay contributing factors. 
In this work, we analyze FCFS (first come first served) queue disciplines to model the generation and availability instants for updates associated with the mobile agent. Each update packet enters a queue waiting for resources to be transmitted to the central server. For operational efficiency, a packet is chosen for transmission with a probability $p$, otherwise discarded. The packet then travels through a wireless channel before it becomes available at the central server.
% We use the Kendall's notation to distinguish between the different queuing disciplines. It follows the format A/S/c, where "A" represents the arrival process, "S" denotes the service time distribution, and "c" indicates the number of servers. 
The disciplines we analyze are namely M/M/1, D/M/1, M/D/1 and D/D/1 respectively~\cite{nelson2013probability}. Where ``D" and ``M" denote a deterministic and Markovian process respectively. We use $\lambda$ and $\mu$ to denote the rate for arrival and departure processes respectively. For deterministic queues, define, $ D_s=\frac{1}{\mu}$ and $ D_a=\frac{1}{\lambda}$ as the mean update arrival time and mean service time respectively. The mean utilization for the server processing these updates is denoted by $ \rho=\frac{p\lambda}{\mu}$.  

\section{Age of Positioning}
\subsection{A new metric}
\begin{figure}[!htbp]
	 \centering\includegraphics[clip,trim=2.48cm 0.38cm 2cm 3.05cm, width=0.85\linewidth, keepaspectratio]{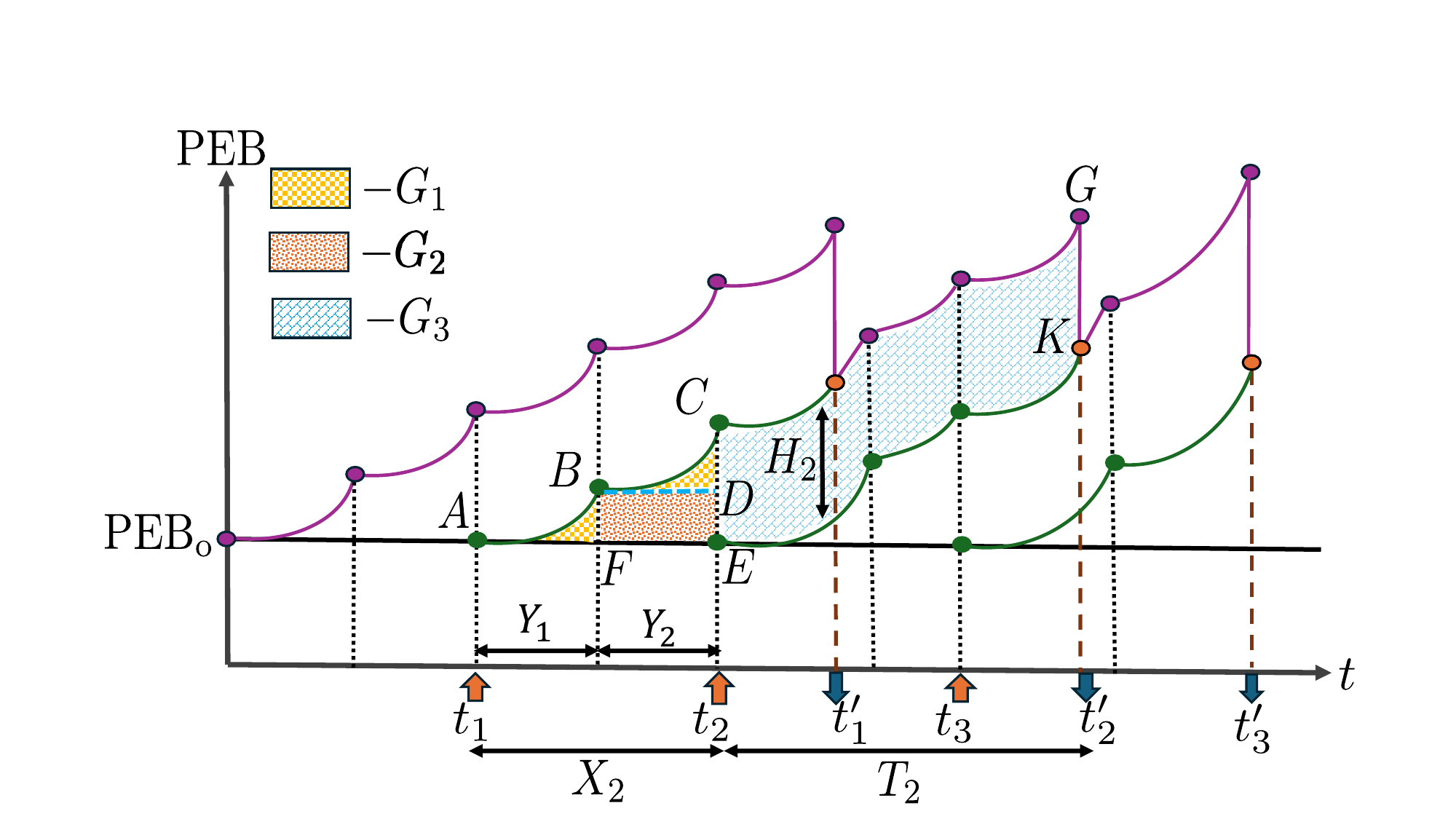}
		 \caption{Variation of PEB with age. The dotted black lines denotes new hop instants and the dotted brown lines denote the availability instants at the central server.} \label{fig:aop_main}
\end{figure}

In this section, we introduce the proposed AoP metric, which quantifies a mean positioning error for a semi-cooperative agent following RWP trajectory. 
% For conciseness, we present the subsequent discussions for the more general setting without inertial sensors. The same analyses can be easily applied to dead-reckoning enabled framework.
% and observing infrequent and delayed position estimates.
Updates generated at waypoints are modeled as an arrival process, and the instant they become available at the central server constitute the departure process. In~\figref{fig:aop_main}, we plot the evolution of PEB with age. The upward orange arrows mark the instants when position updates were obtained. The downward blue arrows mark the instant the information became available at the central server. As we can see from this figure, by the time these updates become available, the agent has moved to a completely different position. As evident from~\eqnref{eq:peb_expr}, the PEB grows parabolically in between consecutive hops, indicating non-linear aging.

We define AoP as the time-averaged difference between PEB using an aged estimate (or DR estimate) and assuming instant estimation, over a certain observation interval $\mathcal{T}$, given as:
\begin{small}
\begin{equation}
\label{eq:aop_int}
    {\rm AoP}=\frac{1}{\mathcal{T}} \int_0^{\mathcal{T}} {\rm } \left({\rm PEB}(t) - {\rm PEB}_{\rm o}\right) {\rm d}t.
\end{equation}
\end{small}

For MA operation, this metric captures how well the last processed position estimate remains valid until a new update is obtained and processed. For the other operation, it captures the accuracy of DR estimation. To evaluate the integral in~\eqnref{eq:aop_int}, we decompose the enclosed area into disjoint areas $Q_i \; \left(A\rightarrow B \rightarrow C \rightarrow G \rightarrow K \rightarrow E \rightarrow F \rightarrow A\right)$. Further we divide each of these areas into sub-areas $G_1,G_2,G_3$ as shown in~\figref{fig:aop_main}. Let $X_i$ denote the time interval between the $(i-1)$-th and $i$-th update, and $T_i$ as the sum of queuing and waiting times that the update spends before the position information becomes available at the central server. $\left\{Y_j\right\}_{j=1}^k$ denotes the time interval between successive hops in which updates are not polled, {\em i.e,} $X_i=\sum_{j=1}^k Y_j$.

The average age in~\eqnref{eq:aop_int} can be expressed as follows~\cite{kaul}:
\begin{small}
\begin{equation}
    {\rm AoP}=p \kappa \lambda \mathbb{E}\left[Q_i\right].
\end{equation}
\end{small}
\subsection{AoP with queuing}
In this section, we derive and analyze the AoP for different queuing scenarios.
The area under each segment $\left\{G_i\right\}$ can be written as:
\begin{small}
\begin{equation}
    \begin{split}                   
        &\mathbb{E}\left[G_1\right]=\mathbb{E}\left[\sum_{j=1}^k \frac{v^2}{3} Y_j^3\right], \;\; \mathbb{E}\left[G_3\right] = \mathbb{E} \left[H_i T_i\right], \\
        &\mathbb{E}\left[G_2\right]=\mathbb{E} \left[\sum_{j=1}^k Y_j.\left(\sum_{l=1}^{j-1}v^2 Y_l^2\right)\right],
    \end{split}
\end{equation}
\end{small}where, $H_i=\sum_{j=1}^k v^2 Y_j^2$.

\begin{thm}
    For $\rho<1$, the AoP for an agent following RWP trajectory under M/M/1 queuing can be computed using:
    \begin{small}
    \begin{equation}
    \begin{split}
        \mathbb{E}\left[Q_i\right]&= 2 v^2 \frac{(1-p)}{p^2 \lambda^3} + \frac{2v^2}{p\lambda^3} + \frac{2v^2}{p\lambda^2\mu} \\
        &+ \frac{2v^2 p \rho_e}{\mu(1-\rho)}\frac{1}{(\mu(1-\rho)+\lambda)^2 (1-\rho_e(1-p))^2},
        \end{split}
    \end{equation}
    \end{small}where, $ \rho_e=\frac{\lambda}{\mu(1-\rho)+\lambda}$.
\end{thm}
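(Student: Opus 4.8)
The plan is to evaluate $\mathbb{E}[Q_i]=\mathbb{E}[G_1]+\mathbb{E}[G_2]+\mathbb{E}[G_3]$ term by term against the M/M/1 primitives. Under Markovian arrivals the inter-hop gaps $Y_j$ are i.i.d.\ $\mathrm{Exp}(\lambda)$, and since each hop is polled independently with probability $p$, the number of hops $k$ per inter-update interval is geometric, $\mathbb{P}(k=m)=(1-p)^{m-1}p$, independent of the $Y_j$; hence $X_i=\sum_{j=1}^k Y_j\sim\mathrm{Exp}(p\lambda)$. I would record the moments $\mathbb{E}[Y_j]=1/\lambda$, $\mathbb{E}[Y_j^2]=2/\lambda^2$, $\mathbb{E}[Y_j^3]=6/\lambda^3$ together with $\mathbb{E}[k]=1/p$ and $\mathbb{E}[k(k-1)]=2(1-p)/p^2$.

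The two arrival-side areas follow from Wald-type identities. Because $k$ is independent of the i.i.d.\ gaps, $\mathbb{E}[G_1]=\tfrac{v^2}{3}\mathbb{E}[k]\mathbb{E}[Y^3]=\tfrac{2v^2}{p\lambda^3}$; and the double sum in $G_2$ contains $\binom{k}{2}$ independent cross terms $\mathbb{E}[Y_jY_l^2]=\mathbb{E}[Y]\mathbb{E}[Y^2]$, so $\mathbb{E}[G_2]=\tfrac{v^2}{\lambda^3}\mathbb{E}[k(k-1)]=\tfrac{2v^2(1-p)}{p^2\lambda^3}$. These are the first two terms of the claim.

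The third area $\mathbb{E}[G_3]=\mathbb{E}[H_iT_i]$ is the crux, and I would split the sojourn time $T_i=W_i+S_i$. The service time $S_i\sim\mathrm{Exp}(\mu)$ is independent of the arrival geometry, giving $\mathbb{E}[H_iS_i]=\mathbb{E}[H_i]\mathbb{E}[S_i]=\tfrac{2v^2}{p\lambda^2}\cdot\tfrac1\mu$, the third term. The waiting part $\mathbb{E}[H_iW_i]$ is where the genuine difficulty sits: $H_i$ is a functional of the fine structure of the $i$-th inter-arrival interval while $W_i$ is a functional of the accumulated workload, and the two are correlated. The key observation is that both depend on the past only through the scalar $X_i$: conditionally on $X_i=x$ the internal split $(k;Y_1,\dots,Y_k)$ is independent of the history that determines $W_i$, so $H_i$ and $W_i$ are conditionally independent and $\mathbb{E}[H_iW_i]=\mathbb{E}\big[\mathbb{E}[H_i\mid X_i]\,\mathbb{E}[W_i\mid X_i]\big]$.

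It then remains to compute the two conditional means and integrate. For $\mathbb{E}[W_i\mid X_i=x]$, Lindley's recursion gives $W_i=(U_{i-1}-X_i)^+$ with $U_{i-1}$ the previous sojourn time, which in steady state is $\mathrm{Exp}(\mu(1-\rho))$ and independent of the memoryless gap $X_i$; integrating yields $\mathbb{E}[W_i\mid X_i=x]=e^{-\mu(1-\rho)x}/(\mu(1-\rho))$. For $\mathbb{E}[H_i\mid X_i=x]$, I would use that i.i.d.\ exponentials conditioned on their sum $x$ are uniform on the simplex, so $Y_j/x\sim\mathrm{Beta}(1,m-1)$ when $k=m$ and $\mathbb{E}[\sum_jY_j^2\mid X_i=x,k=m]=2x^2/(m+1)$; averaging over the tilted conditional law of $k$ given $X_i=x$ via the series $\sum_{n\ge0}a^n/((n+2)n!)=((a-1)e^a+1)/a^2$ produces a closed form. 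Finally, integrating the product of the two conditional means against $f_{X_i}(x)=p\lambda e^{-p\lambda x}$ involves only elementary exponential integrals; the algebra collapses using $\mu-p\lambda=\mu(1-\rho)$ and $(1-p)\lambda+\mu=\mu(1-\rho)+\lambda$, and after substituting $\rho_e=\lambda/(\mu(1-\rho)+\lambda)$ one recovers the stated fourth term. The main obstacle is precisely the correlation in $\mathbb{E}[H_iW_i]$; once the conditional-independence-given-$X_i$ reduction is established, the rest is bookkeeping with exponential and geometric moments.
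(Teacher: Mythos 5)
Your proposal is correct, and it reproduces the paper's overall skeleton: the same decomposition $\mathbb{E}[Q_i]=\mathbb{E}[G_1]+\mathbb{E}[G_2]+\mathbb{E}[G_3]$, identical treatment of $G_1$ and $G_2$ via geometric $k$ and exponential moments, the same split $T_i=W_i+S_i$, and the same three queueing facts at the core of the cross term (Lindley's recursion $W_i=(T_{i-1}-X_i)^+$, the stationary M/M/1 sojourn time being $\mathrm{Exp}(\mu(1-\rho))$ with the thinned arrival rate $p\lambda$, and independence of $T_{i-1}$ from the current interval's structure). Where you genuinely diverge is in how the hard expectation $\mathbb{E}[H_iW_i]$ is organized. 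The paper conditions on the full vector $(k,Y_1,\dots,Y_k)$, integrates out $T_{i-1}$ first to get the factor $e^{-\mu(1-\rho)\sum_j Y_j}/(\mu(1-\rho))$, and then absorbs that exponential into the joint density of the $Y_j$'s — an exponential tilting that keeps the gaps i.i.d.\ with rate $\mu(1-\rho)+\lambda$ and reweights $k$ by $\rho_e^k$, so the final step is an elementary geometric-series sum (this is exactly where $\rho_e$ enters). You instead reduce everything to the scalar $X_i$: you argue $H_i$ and $W_i$ are conditionally independent given $X_i$ (valid here, since $T_{i-1}$ is independent of the whole pair $((k,\{Y_j\}),X_i)$, not merely of $X_i$ — worth stating, since independence alone does not imply conditional independence), then compute $\mathbb{E}[H_i\mid X_i=x]$ via the Dirichlet/simplex law of exponentials given their sum together with the Poisson-tilted law of $k$ given $X_i$, and integrate against $f_{X_i}(x)=p\lambda e^{-p\lambda x}$. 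I checked that your route closes: with $a=(1-p)\lambda x$ one gets $\mathbb{E}[H_i\mid X_i=x]=\tfrac{2v^2}{(1-p)^2\lambda^2}\left[(1-p)\lambda x-1+e^{-(1-p)\lambda x}\right]$, and the resulting elementary integrals collapse (using $\mu(1-\rho)+p\lambda=\mu$) to $\tfrac{2v^2p\lambda}{\mu^2\,\mu(1-\rho)\,[\mu(1-\rho)+\lambda]}$, which equals the paper's fourth term after substituting $\rho_e$. The trade-off: the paper's tilting argument is shorter and stays entirely within geometric/exponential bookkeeping, while your conditional-independence reduction is conceptually cleaner and would extend more readily to settings where $\mathbb{E}[W_i\mid X_i]$ is known but the joint tilting trick is unavailable, at the cost of needing the simplex moments and the series identity $\sum_{n\ge0}a^n/((n+2)n!)=((a-1)e^a+1)/a^2$.
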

\begin{proof}
    Refer to Appendix-A.
\end{proof}

% \begin{figure}[H]
%     \centering               \includegraphics[width=0.8\linewidth,keepaspectratio]{Plots/mm1_theo_vs_sim.eps}
%     \caption{MM1} \label{fig:mm1}
% \end{figure}

\begin{thm}
    For $D_s<D_a$, the AoP for an agent following RWP trajectory under D/D/1 queuing can be computed using:
    \begin{small}
    \begin{equation}
        \mathbb{E}\left[Q_i\right]= \frac{v^2 D_a^2 D_s}{p} +\frac{v^2 D_a^3}{3p} +\frac{v^2 D_a^3 (1-p)}{p^2}. 
    \end{equation}
    \end{small}
\end{thm}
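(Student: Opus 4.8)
The plan is to specialize the general area decomposition $\mathbb{E}[Q_i] = \mathbb{E}[G_1] + \mathbb{E}[G_2] + \mathbb{E}[G_3]$ recorded just above the theorem to the fully deterministic D/D/1 setting, where every random duration degenerates to a constant and the only surviving randomness is the number of hops $k$ between two consecutively polled updates. Concretely, because hops (arrivals) are equispaced, each inter-hop interval is $Y_j = D_a$ almost surely, so $X_i = k D_a$; and because each hop's update is polled independently with probability $p$, the count $k$ is geometric, $\mathbb{P}(k=n) = (1-p)^{n-1} p$, with moments $\mathbb{E}[k] = 1/p$ and $\mathbb{E}[k(k-1)] = 2(1-p)/p^2$.

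First I would dispose of the service term. The inter-arrival gap between polled updates is $X_i = k D_a \geq D_a$, which under the hypothesis $D_s < D_a$ strictly exceeds the service time; hence every polled packet finds the server empty, incurs no queuing delay, and departs after exactly $T_i = D_s$. This collapses $G_3$: with $H_i = \sum_{j=1}^k v^2 Y_j^2 = k v^2 D_a^2$ and $T_i = D_s$ deterministic, $\mathbb{E}[G_3] = v^2 D_a^2 D_s\, \mathbb{E}[k] = v^2 D_a^2 D_s / p$, the first claimed term.

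Then the remaining two areas follow by substitution. Setting $Y_j = D_a$ gives $\mathbb{E}[G_1] = \tfrac{v^2}{3} D_a^3\, \mathbb{E}[k] = v^2 D_a^3/(3p)$, and the inner sum in $G_2$ collapses to $\sum_{l=1}^{j-1} v^2 D_a^2 = (j-1) v^2 D_a^2$, so that $\mathbb{E}[G_2] = v^2 D_a^3\, \mathbb{E}[\sum_{j=1}^k (j-1)] = \tfrac{v^2 D_a^3}{2}\, \mathbb{E}[k(k-1)] = v^2 D_a^3 (1-p)/p^2$. Adding the three contributions reproduces the stated expression. I expect the only real subtlety to be the no-waiting argument for $T_i$: one must verify that the stability gap $D_s < D_a$, combined with the fact that polled arrivals are separated by integer multiples of $D_a$, leaves no residual sojourn time, so $G_3$ depends on $D_s$ alone; once this is secured, the rest is elementary bookkeeping of the geometric moments.
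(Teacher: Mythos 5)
Your proposal is correct and follows essentially the same route as the paper's Appendix-B: the same $G_1$, $G_2$, $G_3$ decomposition with $Y_j = D_a$ deterministic, $k$ geometric with parameter $p$, and the key observation that $W_i = 0$ since polled arrivals are spaced by at least $D_a > D_s$. Your explicit justification of the no-waiting step (each polled packet finds the server empty because consecutive polled arrivals are separated by $kD_a \geq D_a$) is slightly more detailed than the paper's one-line assertion, but the argument is identical in substance.
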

\begin{proof}
    Refer to Appendix-B.
\end{proof}

% \begin{figure}[H]
%     \centering               \includegraphics[width=0.8\linewidth,keepaspectratio]{Plots/dd1_sim_vs_theo.eps}
%     \caption{DD1} \label{fig:dd1}
% \end{figure}

\begin{thm}
    The AoP for an agent following RWP trajectory under D/M/1 queuing can be computed using:
    \begin{small}
    \begin{equation}
        \begin{split}
            \mathbbm{E}[Q_i] &= \frac{v^2 D_a^3}{3p} + \frac{v^2 D_a^3 (1-p)}{p^2} \\
%             \end{split}
% \end{equation*}
% \begin{equation}
%     \begin{split}
            & + v^2 D_a^2\left[\frac{p e^{-\mu(1-\beta)D_a}}{\mu(1-\beta)\left\{1-(1-p)e^{-\mu(1-\beta)D_a}\right\}^2}+\frac{1}{\mu p}\right],
        \end{split}
    \end{equation}
    \end{small}where, $\beta$ is the solution of the following equation:
    \begin{small}
\begin{equation}
    \beta=\frac{p e^{-\mu(1-\beta)D_a}}{1-(1-p)e^{-\mu(1-\beta)D_a}}.
\end{equation}
\end{small}
\end{thm}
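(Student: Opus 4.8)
The plan is to evaluate $\mathbb{E}[Q_i]=\mathbb{E}[G_1]+\mathbb{E}[G_2]+\mathbb{E}[G_3]$ from the same area decomposition used earlier, exploiting the fact that only the rectangular piece $G_3$ feels the service discipline. Since the arrival side is unchanged from the D/D/1 setting --- deterministic hops of length $D_a$, with the number of hops $k$ between polled updates geometric with parameter $p$ --- the sub-areas $G_1$ and $G_2$ are \emph{identical} to those of the preceding D/D/1 theorem. Concretely, substituting $Y_j=D_a$ together with $\mathbb{E}[k]=1/p$ and $\mathbb{E}[k(k-1)]=2(1-p)/p^2$ into the given expressions for $\mathbb{E}[G_1]$ and $\mathbb{E}[G_2]$ reproduces the two terms $\frac{v^2 D_a^3}{3p}$ and $\frac{v^2 D_a^3(1-p)}{p^2}$. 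All of the work therefore lies in the term $\mathbb{E}[G_3]=\mathbb{E}[H_i T_i]$, where $H_i=v^2 D_a^2 k_i$ is the error accumulated up to the generation of update $i$ and $T_i$ is that update's system time.

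First I would pin down the queue. The packets admitted for transmission are the polled updates, so their inter-arrival time is $X_i=k_i D_a$ with $k_i$ geometric, whose Laplace--Stieltjes transform is
\begin{small}
\begin{equation*}
\tilde{A}(s)=\mathbb{E}\left[e^{-sX_i}\right]=\frac{p\,e^{-sD_a}}{1-(1-p)\,e^{-sD_a}}.
\end{equation*}
\end{small}
This is a GI/M/1 queue with service rate $\mu$, so I would invoke its standard solution: let $\beta$ be the unique root in $(0,1)$ of $\beta=\tilde{A}(\mu(1-\beta))$, which is precisely the fixed-point equation stated in the theorem. I would then use the two classical GI/M/1 facts that this root delivers --- an arriving update sees a number of customers that is geometric with parameter $\beta$, and its stationary system time is therefore exponential with rate $\mu(1-\beta)$.

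To evaluate $\mathbb{E}[G_3]=v^2 D_a^2\,\mathbb{E}[k_i T_i]$ I would split the system time as $T_i=W_i+S_i$ into queueing and service. The service contribution is immediate, since $S_i\sim\mathrm{Exp}(\mu)$ is independent of the arrival-side count $k_i$: $\mathbb{E}[k_i S_i]=\mathbb{E}[k_i]/\mu=1/(p\mu)$, which is the $\frac{1}{\mu p}$ term inside the bracket. For the queueing contribution I would write Lindley's recursion $W_i=(V_{i-1}-X_i)^+$, where $V_{i-1}=W_{i-1}+S_{i-1}$ is the sojourn time of the previous update. Two observations make this tractable: $V_{i-1}$ is itself exponential with rate $\mu(1-\beta)$, and, being a function of the history up to update $i-1$, it is independent of the \emph{future} gap $X_i=k_i D_a$. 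Conditioning on $k_i=n$ then gives $\mathbb{E}[(V_{i-1}-nD_a)^+]=\frac{1}{\mu(1-\beta)}e^{-\mu(1-\beta)nD_a}$, and summing $n$ times this quantity against the geometric law of $k_i$ --- using $\sum_{n\ge1} n x^{n-1}=(1-x)^{-2}$ with $x=(1-p)e^{-\mu(1-\beta)D_a}$ --- collapses to the remaining bracket term.

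The step I expect to be the main obstacle is exactly this queueing contribution, because $H_i$ and $W_i$ are \emph{not} independent: a longer inter-update gap $X_i$ inflates $H_i$ while deflating $W_i$, so treating them as independent yields the wrong constant and the negative correlation must be retained. The insight that unlocks the calculation is recognising that $V_{i-1}$ is the previous customer's sojourn time --- hence exponential with the GI/M/1 rate $\mu(1-\beta)$ and, decisively, independent of the upcoming gap --- which is precisely what lets the conditioning-and-summing argument close and match the claimed closed form.
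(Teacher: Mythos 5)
Your proposal is correct and takes essentially the same route as the paper's Appendix-C: identical $G_1,G_2$ carried over from the D/D/1 case, the fixed point $\beta=\tilde{A}(\mu(1-\beta))$ obtained from the geometric-of-deterministic inter-arrival distribution, and $\mathbb{E}[G_3]=v^2D_a^2\left(\mathbb{E}[kW_i]+\mathbb{E}[kS_i]\right)$ evaluated via the Lindley recursion $W_i=(T_{i-1}-kD_a)^+$, the exponential $(\mu(1-\beta))$ sojourn time of the A/M/1 queue, and the geometric series $\sum_{n\geq 1} n x^{n-1}=(1-x)^{-2}$. Your treatment is in fact slightly more careful than the paper's (you make explicit the independence of the previous sojourn time from the upcoming gap and the need to retain the $H_i$--$W_i$ correlation, and you avoid the paper's typographical slips of writing $D_s$ for $D_a$ and momentarily dropping the factor $k$), but it is the same argument.
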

\begin{proof}
    Refer to Appendix-C.
\end{proof}
\begin{figure*}[!b]
\noindent\rule{\textwidth}{0.5pt}
\begin{small}
    \begin{equation}
    \label{eq:Tk}
        \begin{split}
            &T(k)=\int \dots \int \lambda^k g(y_1,y_2,\dots,y_k)(v^2 \sum_{j=1}^k y_j^2) e^{-\lambda\sum_{j=1}^k y_j} {\rm d}y_1 {\rm d}y_2\dots {\rm d}y_k, \\
            & g(y_1,\dots,y_k) = \frac{D_s \rho}{2(1-\rho)} +D_s -\sum_{j=1}^k y_j -\mathbbm{1}\left(\sum_{j=1}^k y_j \geq D_s \right) \int_0^{\sum_{j=1}^k y_j -D_s} (w+D_s-\sum_{j=1}^k y_j)f_W(w){\rm d}w ,\\
            & F_W(w)=(1-\rho) \sum_{k=0}^m e^{p\lambda(w-kD_s)}\frac{\left\{-p\lambda(w-kD_s)\right\}^{k}}{k!}; m=\left\lfloor\frac{w}{D_s}\right\rfloor.
        \end{split}
    \end{equation}
    \end{small}
\end{figure*}
\begin{thm}
    For $\rho<1$, the AoP for an agent following RWP trajectory under M/D/1 queuing can be computed using:
    \begin{small}
    \begin{equation}
        \mathbb{E}\left[Q_i\right] =\frac{2v^2}{p\lambda^3} + \frac{2v^2(1-p)}{p^2\lambda^3} + \frac{2v^2 D_s}{p\lambda^2}  + \sum_{k=1}^{\infty} p(1-p)^{k-1} T(k),
    \end{equation}
    \end{small}where, $T(k)$ is defined in~\eqnref{eq:Tk}. The probability distribution function (PDF) of waiting time in M/D/1 queue is obtained numerically by taking the derivative of the cumulative density function (CDF)\cite{franx_md1}.
\end{thm}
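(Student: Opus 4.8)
The plan is to evaluate $\mathbb{E}[Q_i]=\mathbb{E}[G_1]+\mathbb{E}[G_2]+\mathbb{E}[G_3]$ term by term, specializing the area decomposition of \figref{fig:aop_main} to the M/D/1 discipline. The two structural facts I would lean on are that Markovian arrivals make the hop durations $Y_j$ independent and exponentially distributed with rate $\lambda$, and that the number of hops $k$ separating two successively polled updates is geometric with $\mathbb{P}(k=n)=p(1-p)^{n-1}$, independent of the $\{Y_j\}$. This furnishes the moments $\mathbb{E}[Y]=1/\lambda$, $\mathbb{E}[Y^2]=2/\lambda^2$, $\mathbb{E}[Y^3]=6/\lambda^3$, $\mathbb{E}[k]=1/p$, and $\mathbb{E}[k(k-1)]=2(1-p)/p^2$, which are all that the first three terms require.

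First I would dispatch the two ``mobility-only'' contributions. A Wald-type identity gives $\mathbb{E}[G_1]=\tfrac{v^2}{3}\mathbb{E}[k]\mathbb{E}[Y^3]=\tfrac{2v^2}{p\lambda^3}$. For $G_2$ I would rewrite the nested sum as a sum over ordered pairs $(j,l)$ with $l<j$; by independence each such pair contributes $\mathbb{E}[Y]\mathbb{E}[Y^2]$, and there are $\binom{k}{2}$ of them, so $\mathbb{E}[G_2]=v^2\tfrac{\mathbb{E}[k(k-1)]}{2}\mathbb{E}[Y]\mathbb{E}[Y^2]=\tfrac{2v^2(1-p)}{p^2\lambda^3}$. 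These reproduce the first two terms of the claim.

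The crux is $\mathbb{E}[G_3]=\mathbb{E}[H_iT_i]$ with $H_i=v^2\sum_{j=1}^k Y_j^2$ and sojourn time $T_i=W_i+D_s$. The service part splits off immediately as $D_s\,\mathbb{E}[H_i]=D_s v^2\mathbb{E}[k]\mathbb{E}[Y^2]=\tfrac{2v^2 D_s}{p\lambda^2}$, the third term. For the waiting part $\mathbb{E}[H_iW_i]$ I would invoke Lindley's recursion $W_i=\max(0,\,W_{i-1}+D_s-X_i)$ with $X_i=\sum_{j=1}^k Y_j$. Conditioning on $\{Y_j\}$ fixes $H_i$, while $W_{i-1}$ is stationary and independent of the current hop durations; hence $\mathbb{E}[H_iW_i\mid\{Y_j\}]=H_i\int_0^\infty\max(0,\,w+D_s-X_i)f_W(w)\,\mathrm{d}w=H_i\,g(y_1,\dots,y_k)$. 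Splitting this integral according to $X_i<D_s$ versus $X_i\ge D_s$ produces exactly the two-case expression for $g$ in \eqnref{eq:Tk}, with the constant $\tfrac{D_s\rho}{2(1-\rho)}$ recognized as the Pollaczek--Khinchine mean wait $\mathbb{E}[W]$ of an M/D/1 queue fed at the effective rate $p\lambda$. Averaging $H_i\,g$ against the exponential density $\lambda^k e^{-\lambda\sum y_j}$ and then over the geometric law of $k$ yields $\mathbb{E}[H_iW_i]=\sum_{k=1}^\infty p(1-p)^{k-1}T(k)$, completing the claim; the stationary CDF $F_W$ is the known Franx M/D/1 expression at rate $p\lambda$, and $f_W$ is obtained by differentiating it numerically.

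The hard part will be justifying the factorization $\mathbb{E}[H_iW_i\mid\{Y_j\}]=H_i\,g(\{Y_j\})$: $H_i$ depends on the individual hop durations, whereas $W_i$ depends on their aggregate $X_i$ together with the queue's past through $W_{i-1}$, so the two are not independent. The resolution is the conditional-independence structure of Lindley's recursion---given $X_i$, the waiting time is a deterministic function of $W_{i-1}$, which is generated by arrivals preceding the current cycle and is therefore independent of the present $\{Y_j\}$. A secondary obstacle is purely computational: unlike the M/M/1 case, the M/D/1 waiting-time law has no elementary closed form, so the final term necessarily remains an infinite series of integrals rather than collapsing to a closed expression.
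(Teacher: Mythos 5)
Your proposal is correct and follows essentially the same route as the paper's Appendix-D: the same $G_1,G_2,G_3$ decomposition with geometric--exponential moment computations for the first three terms, and the same conditioning argument based on $W_i=(W_{i-1}+D_s-X_i)^+$ with $W_{i-1}$ stationary and independent of the current $\{Y_j\}$, yielding $\mathbb{E}[H_iW_i\mid\{Y_j\}]=H_i\,\mathbb{E}_W\left[\left(W+D_s-\sum_{j}Y_j\right)^+\right]=H_i\,g(\{Y_j\})$ with $W$ the stationary M/D/1 waiting time at effective rate $p\lambda$. The only difference is presentational: you explicitly expand $g$ into its two-case form and identify the constant $\tfrac{D_s\rho}{2(1-\rho)}$ as the Pollaczek--Khinchine mean wait, steps the paper leaves implicit in the statement of \eqnref{eq:Tk}.
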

\begin{proof}
    Refer to Appendix-D.
\end{proof}
\section{Numerical Results}
In this section, we perform simulations to verify our analyses and discuss key insights on system-level performance. For baseline PEB, the choice of $\mathbf{C}_{\rm o}$
%we chose the estimation co-variance matrix
% $\displaystyle \mathbf{C_{init}}=\left[\begin{array}{cc}
%      1&0.75  \\
%      0.75& 1.5
% \end{array}\right]$,
%it is to be noted that this 
does not impact the numerical simulations since the dependent term is eliminated when taking the difference in~\eqnref{eq:peb_expr}. The velocity of agent was taken as $5 {\rm ms}^{-1}$. For server utilization, we fixed $\mu=20$, $\lambda=20$, and varied the polling probability $p$. This results in an average service time of roughly $50 {\rm ms}$ and an average update interval of roughly $5-500 {\rm ms}$, which covers a wide range of practical localization scenarios~\cite{enhancing_loc_awareness}.

\figref{fig:aop_combined} presents the analytical AoP results against different server utilizations for the MA mode of operation. The result for DR-aided operation is a scaled (by $\kappa$) version of this and is not provided due to space constraints. As expected, the Monte Carlo simulations closely follow our analytical results. We also observe that at lower server utilizations, the AoP is primarily influenced by the arrival process. This happens because fewer packets arrive and thus most of the packets are immediately serviced, leading to negligible waiting times. As the server utilization is increased, there are more packets waiting for service, thus the service process influences the age more. We can observe Markovian service processes to observe a higher AoP when the utilization is high. This is because in Markovian queues there are some packets which experience longer than average service times which results in a higher overall AoP.

% In~\figref{fig:aop_3d}, we plot AoP against different values of polling probability and server utilization. 
In general, we expect AoP to be higher when the arrival frequency ($p\lambda$) is low, due to fewer updates. Additionally, as the arrival rate increases, packets experience longer waiting times for service, leading to increased aging and consequentially a higher AoP. For M/M/1, D/M/1 and M/D/1 queues we can observe the existence of an optimum operating point. In case of D/D/1 queue, the optimum occurs when polling probability tends to $1$, this is because updates experience no waiting time when $\lambda<\mu$. 
% Therefore as the arrival frequency ($p\lambda$) is increased we obtain fresher updates more frequently and without waits, thus an overall lower AoP.

\begin{figure}[!htbp]
    \centering               \includegraphics[width=0.85\linewidth,keepaspectratio]{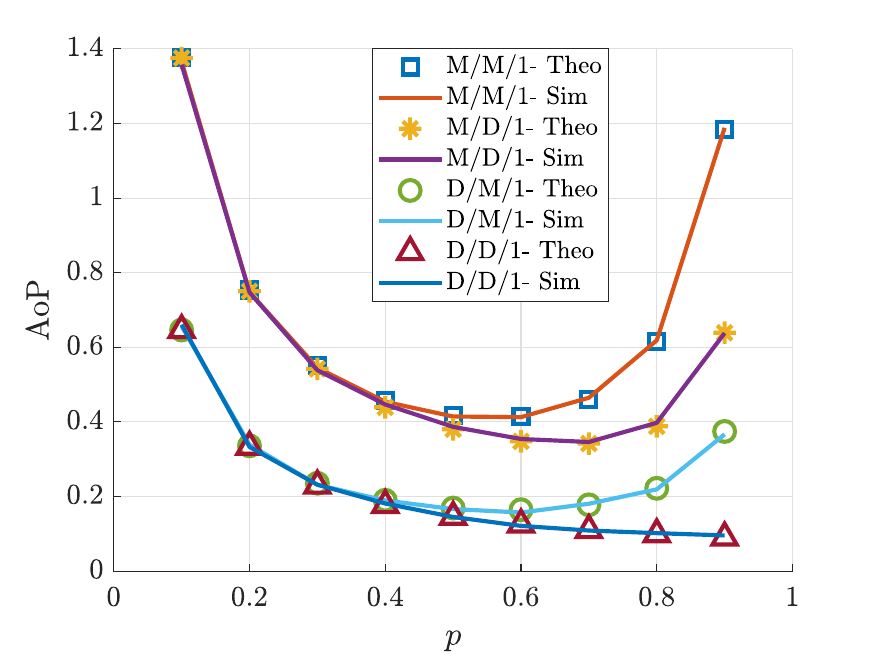}
    \caption{Verification of analytical results for M/M/1, D/M/1, M/D/1 and D/D/1 disciplines for MA operation of the agent. } \label{fig:aop_combined}
\end{figure}

\section{Conclusions}

In this work, we introduced a novel AoP metric which characterizes the positioning performance for a semi-cooperative agent following RWP trajectory with different levels of knowledge of the agent's random movements. Our metric jointly captures the impact of both spatial estimation error and temporal decay due to the agent's mobility on its positioning accuracy. This is the first work that has explored the interplay of age with positioning precision for stochastic trajectories. We used the RWP model, which is a popular choice for agent motion modeling making it relevant to a broad range of application scenarios such as aerial communications, WSNs, and MANETs. We developed a queue-based framework to account for factors that delay position-related information. We derived a closed-form expressions to characterize AoP under different FCFS queuing models and verified them with numerical simulations. We demonstrated the existence of an optimum polling probability for achieving the best AoP. 

This work has potential applications in numerous areas, such as age-aware scheduling and resource allocation for aerial communication systems, as well as improving energy efficiency in large-scale WSNs. Another promising direction for exploration is the analysis of preemptive queuing disciplines, where only the latest update packet is retained while all others are discarded.

\section*{Appendix-A}
We begin by evaluating the mean of each of the sub-areas individually. We start with $G_1$:
\begin{small}
\begin{equation}
    \begin{split}
        \mathbb{E}[G_1]&=\frac{v^2}{3}\mathbb{E}\left[\sum_{j=1}^k Y_j^3\right]=\frac{v^2}{3} \mathbb{E}_k\left[\sum_{j=1}^k\mathbb{E}_{Y}\left[Y_j^3\right]\right] \\
        &\overset{(a)}{=}\frac{v^2}{3} \sum_{k=1}^\infty p(1-p)^{k-1} \left[k \frac{6}{\lambda^3}\right] 
        % & =\frac{2 v^2}{\lambda^3} \frac{p}{1-p} \sum_{k=1}^\infty k(1-p)^k 
        \overset{(b)}{=} \frac{2v^2}{\lambda^3 p},
    \end{split}
\end{equation}
\end{small}where, $(a)$ follows from intervals being independent, and exponentially distributed in a Poisson process, and the definition of gamma integral, and $(b)$ uses the partial sum formula.
% given as:
% \begin{equation}
%     \sum_{i=1}^\infty i x^i=\frac{x}{(x-1)^2}; \;\;\;  |x| <1
% \end{equation}
Next we evaluate the mean area covered under $G_2$ as follows:
\begin{small}
\begin{equation*}
    \begin{split}
    \mathbb{E}\left[G_2\right]&=\mathbb{E} \left[\sum_{j=1}^k Y_j.\left(\sum_{l=1}^{j-1}v^2 Y_l^2\right)\right] \\
    &=v^2 \mathbb{E}_k\left[\sum_{j=1}^k \sum_{l=1}^{j-1} \mathbb{E}\left[Y_jY_l^2\right]\right] = v^2 \mathbb{E}_k\left[\frac{k(k-1)}{2}\frac{1}{\lambda} \frac{2}{\lambda^2}\right] \\     
    % \end{split}
    % \end{equation}
    % \begin{equation}
    % \begin{split}
    & = \frac{v^2 p}{(1-p)\lambda^3} \sum_{k=1}^\infty k(k-1)(1-p)^k = \frac{2v^2(1-p)}{p^2\lambda^3}.
    \end{split}
\end{equation*}
\end{small}Finally, we compute the mean area covered under $G_3$ as follows:
\begin{small}
\begin{equation*}
    \begin{split}
        \mathbb{E}\left[G_3\right]&= \mathbb{E}[H_iT_i] = \mathbb{E}[H_i(S_i+W_i)] \\
\end{split}
\end{equation*}
\begin{equation}
\begin{split}
        &= v^2 \underbrace{\mathbb{E}\left[(T_{i-1}-X_i)^+\sum_{j=1}^k Y_j^2 \right]}_{\rm Term1} + v^2 \underbrace{\mathbb{E}\left[S_i \sum_{j=1}^kY_j^2\right]}_{\rm Term2}.
    \end{split}
\end{equation}
\end{small}The individual terms are given as:
% Since the service and arrival processes are independent, Term2 can be simplified as:
\begin{small}
\begin{equation*}
\begin{split}
        {\rm Term2}&=\frac{1}{\mu}\mathbb{E}\left[\sum_{j=1}^k Y_j^2\right]= \frac{2v^2}{\mu\lambda^2 p},\\
%\end{equation*}
%\end{small}%For Term1, we proceed as follows:
%\begin{small}
%\begin{equation*}
        {\rm Term1}&=\mathbb{E}_{k,\{Y_j\}}\left[\left(T_{i-1}-\sum_{j=1}^k Y_j\right)^+ \sum_{j=1}^k Y_j^2\right] \\
        &\overset{(c)}{=}\mathbb{E}_k\left[ \mathbb{E}_{\{Y_j\}_{j=1}^k} \left[\sum_{j=1}^k Y_j^2 \mathbb{E}_{T|\{Y_j\}_{j=1}^k}\left[\left(T-\sum_{j=1}^k Y_j\right)^+\right]\right] \right] \\
        & \overset{(d)}{=} \mathbb{E}_k\left[ \mathbb{E}_{\{Y_j\}_{j=1}^k} \left[\left(\sum_{j=1}^k Y_j^2\right) \frac{e^{-\mu(1-\rho)\sum_{j=1}^k Y_j}}{\mu(1-\rho)} \right] \right] \\
%     \end{split}
% \end{equation*}
% \begin{equation*}
%     \begin{split}
        % &=\mathbb{E}_k \left[\left(\sum_{j=1}^k y_j^2\right)\int_0^\infty \dots \int_0^\infty \left(\frac{\lambda}{\mu(1-\rho)+\lambda}\right)^k \right. \cdot \\
        & \left.\frac{e^{-\left[\mu(1-\rho)+\lambda\right]\sum_{j=1}^k y_j}}{\mu(1-\rho)}  \left[\mu(1-\rho)+\lambda\right]^k {\rm d}y_1 \dots {\rm d}y_k\right] \\
        & = \mathbb{E}_k\left[k \left[\frac{\lambda}{\mu(1-\rho)+\lambda}\right]^k \frac{1}{\mu(1-\rho)} \right.\cdot \\ &\left .\int_0^\infty y^2 \left[\mu(1-\rho)+\lambda\right]e^{-\left(\mu(1-\rho)+\lambda\right)} {\rm d}y\right]  
        \end{split}
\end{equation*}
\begin{equation*}
    \begin{split}
        &\overset{(e)}{=}\frac{2p}{\mu(1-\rho)(1-p)\left[\mu(1-\rho)+\lambda\right]^2}\left(\sum_{k=1}^\infty k \rho_e^k(1-p)^k\right) \\
        &=\frac{2\rho_e p}{\mu(1-\rho)\left[\mu(1-\rho)+\lambda\right]^2}\frac{1}{\left[1-\rho_e(1-p)\right]^2},
    \end{split}
\end{equation*}
\end{small}where, $(c)$ follows from nested expectation theorem and by writing the waiting time as $W_i=(T_{i-1}-X_i)^+$~\cite{kaul}. For $(d)$, we use the pdf of response time for a M/M/1 queue when it reaches stationary state which is an exponential distribution with rate $\mu(1-\rho)$~\cite{stewart_book}. For $(e)$, we defined $ \rho_e=\frac{\lambda}{\mu(1-\rho)+\lambda}$.
% \begin{small}
% \begin{equation}
%     \begin{split}
%             &=\frac{2p}{\mu(1-\rho)(1-p)\left[\mu(1-\rho)+\lambda\right]^2}\left(\sum_{k=1}^\infty k \rho_e^k(1-p)^k\right) \\
%             &=\frac{2\rho_e p}{\mu(1-\rho)\left[\mu(1-\rho)+\lambda\right]^2}\frac{1}{\left[1-\rho_e(1-p)\right]^2}.
%     \end{split}
% \end{equation}
% \end{small}
\section*{Appendix-B}
We evaluate the different sub-areas as follows:
\begin{small}
\begin{equation*}
\begin{split}
        \mathbb{E}[G_1]&=\frac{v^2}{3}\mathbb{E}\left[\sum_{j=1}^k Y_j^3\right]=\frac{v^2}{3} \mathbb{E}_k\left[k D_a^3\right]=\frac{v^2 D_a^3}{3p},  \\
%         \end{split}
%         \end{equation*}
% \begin{equation*}
%     \begin{split}
        \mathbb{E}\left[G_2\right]&=\mathbb{E} \left[\sum_{j=1}^k Y_j.\left(\sum_{l=1}^{j-1}v^2 Y_l^2\right)\right] \\
%         \end{split}
%         \end{equation*}
% \begin{equation*}
%     \begin{split}
        &=\frac{v^2 D_a^3 p}{2(1-p)} \sum_{k=1}^\infty \left[k(k-1)(1-p)^k\right] =\frac{v^2 D_a^3 (1-p)}{p^2} ,\\
%         \end{split}
%         \end{equation*}
% \begin{equation*}
%     \begin{split}   
        \mathbb{E}[G_3]&=\mathbb{E}[H_iT_i]= \mathbb{E}\left[(\sum_{j=1}^k v^2 D_a^2)(W_i+S_i)\right] \\
        \end{split}
        \end{equation*}
\begin{equation*}
    \begin{split}
        &\overset{(a)}{=} v^2 D_a^2 D_s \sum_{k=1}^\infty k p (1-p)^{k-1} = \frac{v^2 D_a^2 D_s}{p},
    \end{split}
\end{equation*}
\end{small}where, $(a)$ follows from the fact that $W_i=0$ for $D_s\leq D_a$. 
% % which is necessary to ensure the queue remains stationary and doesn't grow infinitely large. 
% Further, service and arrival processes are independent.

\section*{Appendix-C}
We can compute $\displaystyle\mathbb{E}[G_1]$ and $\displaystyle \mathbb{E}[G_2]$ in a similar way to D/D/1 queue. For computing $\displaystyle\mathbb{E}[G_3]$, we proceed as follows:
\begin{small}
\begin{equation}
    \begin{split}
        \mathbb{E}[G_3] &=\mathbb{E}[H_iT_i]= \mathbb{E}\left[(\sum_{j=1}^k v^2 Y_j^2)(W_i+S_i)\right] \\
        &=v^2 D_a^2 \underbrace{\mathbb{E}\left[kW_i\right]}_{\rm Term1} + 
        % \frac{v^2D_a^2}{p\mu}.
        v^2 D_a^2 \underbrace{\mathbb{E}\left[kS_i\right]}_{\rm Term2}.
    \end{split}
\end{equation}
\end{small}The individual terms are given as:
\begin{small}
\begin{equation*}
    \begin{split}
        {\rm Term1} &= \mathbb{E}_k \left[k \mathbb{E}\left[W_i|K=k\right]\right] \\
        &=\mathbb{E}_k\left[k\mathbb{E}\left[(T_{i-1}-kD_s)^+|K=k\right]\right]\overset{(a)}{=} \mathbb{E}_k\left[\frac{e^{-\mu(1-\beta)kD_s}}{\mu(1-\beta)}\right]\\
        % &\overset{(b)}{=}\frac{p}{\mu(1-\beta)(1-p)}\sum_{k=1}^\infty k\left[(1-p)e^{-\mu(1-\beta).D_s}\right]^k \\
        &=\frac{p}{\mu(1-\beta)(1-p)} \frac{(1-p)e^{-\mu(1-\beta)D_s}}{\left(1-(1-p)e^{-\mu(1-\beta)D_s}\right)^2}, \\
    {\rm Term2}&=\frac{v^2D_a^2}{p\mu},
\end{split}
\end{equation*}
\end{small}where, $(a)$ follows from the pdf of system time for a A/M/1 queue, A can be any arrival process~\cite{nelson2013probability}. For obtaining $\beta$, we begin by writing the distribution of inter-arrival times as $f_A(x)= p(1-p)^{i-1}\delta(x-iD_a) \;\; \forall i\in \mathcal{N}$. We can thus solve for $\beta$ as the solution to the following equation~\cite{nelson2013probability}:
\begin{small}
\begin{equation}
   \beta=\frac{p e^{-\mu(1-\beta)D_a}}{1-(1-p)e^{-\mu(1-\beta)D_a}}.
\end{equation}
\end{small}
% where, $\mathcal{L}_A(.)$ denotes the Laplace transform, which in our case simplifies to:
% \begin{small}
% \begin{equation}
%     \begin{split}
%         \mathcal{L}_A(s=\mu(1-\beta))
%         % &=\sum_{i=1}^\infty p(1-p)^{i-1}e^{-siD} \\
%         =\frac{p e^{-\mu(1-\beta)D}}{1-(1-p)e^{-\mu(1-\beta)D}}.
%     \end{split}
% \end{equation}
% \end{small}
\section*{Appendix-D}
We can compute $\mathbb{E}[G_1]$ and $\mathbb{E}[G_2]$ in a similar way to M/M/1 queue. For computing $\mathbb{E}[G_3]$ we proceed as follows:
\begin{small}
\begin{equation}
    \begin{split}
        &\mathbb{E}[G_3]=\mathbb{E}[H_i T_i] =v^2\underbrace{\mathbb{E}\left[\left(\sum_{j=1}^k Y_j^2\right)W_i\right]}_{\rm Term1} + 
        % \frac{2 v^2 D_s}{p\lambda^2}. 
        v^2\underbrace{D_s\mathbb{E}\left[\sum_{j=1}^k Y_j^2\right]}_{\rm Term2}.
    \end{split}
\end{equation}
\end{small}The individual terms are given as:
\begin{small}
\begin{equation*}
    \begin{split}
        {\rm Term2}&=\frac{2 v^2 D_s}{p\lambda^2}, \\
        \end{split}
        \end{equation*}
        \begin{equation*}
        \begin{split}
        {\rm Term1}&=\mathbb{E}\left[\left(\sum_{j=1}^k Y_j^2\right)\left(T_{i-1}-\sum_{j=1}^k Y_j\right)^+\right] \\
        % \end{split}
        % \end{equation*}
        % \begin{equation*}
        % \begin{split}
        & =\mathbb{E}_{\{Y_j\}}\left[\left(\sum_{j=1}^k Y_j^2\right) \mathbb{E}_{T|\{Y_j\}}\left[\left(T-\sum_{j=1}^k Y_j\right)^+\right]\right] \\
     & =\mathbb{E}_{\{Y_j\}}\left[\left(\sum_{j=1}^k Y_j^2\right) \underbrace{\mathbb{E}_W\left[\left(W+D_s-\sum_{j=1}^k Y_j\right)^+\right]}_{\rm g(\{Y_j\}_{j=1}^K)}\right]. \\
     \end{split}
\end{equation*}
\end{small}
% To further simplify ${\rm g(\{Y_j\}_{j=1}^K)}$, we have used the mean of waiting time distribution for M/D/1 queue given as~\cite{nelson2013probability}:
% \begin{small}
% \begin{equation}
%     \mathbb{E}\left[W\right]=\frac{D_s \rho}{2(1-\rho)}.
% \end{equation}
% \end{small}
% where,
% \begin{equation}
% \begin{split}
%     g(\{Y_j\}_{j=1}^K)&= \frac{D_s\rho}{2(1-\rho)}+D_s-\sum_{j=1}^k y_j-\mathbbm{1}\left(\sum_{j=1}^k y_j \geq D_s \right) \cdot \\
%     &\int_0^{\sum_{j=1}^k y_j - D_s}\left(w+D_s-\sum_{j=1}^k y_j\right)f_W(w) dw
% \end{split}
% \end{equation}

% \begin{equation}
%     \begin{split}
%         {\rm Term1}= \sum_{k=1}^\infty p(1-p)^{k-1}\left[\right]
%     \end{split}
% \end{equation}
\bibliographystyle{IEEEtran}
\bibliography{bib}
\vspace{12pt}

\end{document}